\title[ A new proof of the sharp Gordon's lemma]{ A New Proof of the Sharp Gordon's Lemma:  No Eigenvalues for Schr\"odinger Operators with Almost Repetition Potentials}
\author{Wencai Liu}
\address[W. Liu]{ Department of Mathematics, Texas A\&M University, College Station, TX 77843-3368, USA} \email{liuwencai1226@gmail.com; wencail@tamu.edu}
\theoremstyle{plain}
\newtheorem{theorem}{Theorem}[section]
\newtheorem{corollary}[theorem]{Corollary}
\newtheorem{lemma}[theorem]{Lemma}
\newtheorem{remark}{Remark}
\newcommand{\C}{\mathbb{C}}
\newcommand{\Z}{\mathbb{Z}}
\theoremstyle{plain}
\newtheorem{conjecture}{Conjecture}
\begin{document}
	
	% \newcommand{\N}{\mathbb{N}}
	%%% ----------------------------------------------------------------------
	\maketitle
	\begin{center}
		{\it Dedicated to Barry Simon on the occasion of his 80th birthday}
	\end{center}
	\begin{abstract}
Building on  the work of Jitomirskaya-Simon  and Jitomirskaya-Liu, who established the absence of eigenvalues for Schr\"odinger operators with almost reflective repetition potentials, we provide a new proof of the sharp Gordon’s lemma, which asserts the absence of eigenvalues for Schr\"odinger operators with almost repetition potentials.
		
	\end{abstract}

	%%% -------------------------------------------------------------------
	\section{Introduction}

In this paper, 	we study  one-dimensional discrete Schr\"odinger operators  given by:
		\begin{equation}\label{op}
		(Hu)(k) = u(k+1) + u(k-1) + V(k)u(k), \quad k \in \mathbb{Z},
		\end{equation}
		where \(V = \{V(k)\}\) is a bounded potential.

We focus on two types of potentials: almost reflective repetition potentials and almost repetition potentials.  These potentials are closely related to those studied in the context of quasi-periodic Schr\"odinger operators.
		
		A potential \(V = \{V(k)\}_{k \in \mathbb{Z}}\) is said to have {\(\gamma\)-repetitions} for some \(\gamma > 0\) (a modification of the definition in \cite{jz}) if there exists an increasing sequence \(k_n \to \infty\) or \(k_n \to -\infty\) such that, for all \(k \in \mathbb{Z}\):
		\begin{equation}\label{rep}
		|V(k_n + k) - V(k)| \leq e^{-\gamma |k_n|}.
		\end{equation}
		
		Similarly, a potential \(V = \{V(k)\}_{k \in \mathbb{Z}}\) is said to have {\(\gamma\)-reflective repetitions} if there exists a sequence \(k_n \to \infty\) or \(k_n \to -\infty\) such that, for all \(k \in \mathbb{Z}\):
		\begin{equation}\label{rep1}
		|V(k_n - k) - V(k)| \leq e^{-\gamma |k_n|}.
		\end{equation}
		
		Potentials with  repetitions are often referred to as \textit{Gordon-type potentials}, while reflective repetition potentials are also known as \textit{palindromic  potentials}.

We define one-step transfer matrices
$$
A_k(E):=\begin{pmatrix}
E-V(k)&-1\\
1&0
\end{pmatrix}
$$
and the multi-step transfer matrices by
$$
A_{k,j}(E):=\begin{cases}
A_{k-1}(E)A_{k-2}(E)\cdots A_j(E),&j< k;\\
A_{k,j}=I,&k=j;\\
A_{j,k}^{-1}(E),&j>k.
\end{cases}
$$

Define \begin{equation}\label{Lambda}\mathcal{L}(E) :=
\limsup_{k\to \infty}\sup_{ j\in\Z} \frac{\ln \| A_{k+j,j}(E)\|}{k}.\end{equation} 

It is clear that for bounded \(V\), we have \(\mathcal{L}(E) < \infty\) for every \(E\).

The foundational work of Gordon \cite{gor} and Simon \cite{sim1}, commonly referred to as the Gordon-type argument, laid the groundwork for proving that for $\gamma$-repetition potentials, $H$ has no eigenvalues in the regime

$$\{E \in \mathbb{C} : \gamma > 2\mathcal{L}(E)\}.$$

Although Gordon and Simon did not explicitly prove this exact result, it can be established by carefully analyzing their proofs.
Gordon first presented his argument in the 1970s in Russian \cite{gor}, but it remained largely unfamiliar to the broader mathematical community until Barry Simon provided a detailed discussion of the Gordon argument in his 1982   article \cite{sim1}  (also consult with \cite{as82}). For additional historical context, see \cite{Sasha}. 
	
	Recently,
	Avila-You-Zhou \cite{ayz} improved Gordon type arguments, extending the regime to
	\[
	\{E \in \mathbb{C} : \gamma > \mathcal{L}(E)\}.
	\]
	%Although their papers primarily focused on quasi-periodic Schr\"odinger operators, their results hold for general Schrödinger operators. 
	The absence of eigenvalues in the regime \(\{E \in \mathbb{C} : \gamma > \mathcal{L}(E)\}\) is sharp, as demonstrated by the example of almost Mathieu operators (see the definition later) \cite{ayz, jl1, jk}.
	
Over the past decades, Gordon-type arguments have been widely applied to the spectral theory of various models \cite{jl, yang, liu3, jko, simm, jk24, f17, dam20, dam04,liupmj,yzzjst,dfm02,fgpw}.
	
	For \(\gamma\)-reflective repetition potentials, Jitomirskaya and Simon \cite{js} proved that \(H\) has no eigenvalues in
	 the regime
	 \begin{equation}\label{gdec28}
\{E \in \mathbb{C} : \gamma > 2\mathcal{L}(E)\}.
	 \end{equation}
 
  Recently, Jitomirskaya and the author improved the threshold, reducing the numerical number in \eqref{gdec28} from \(2\) to \(1\) \cite{jl2}. This result is also sharp, as evidenced by the almost Mathieu operator \cite{jl2} (also  see Ge-You-Zhou \cite{gyz24}).
 The Jitomirskaya-Simon argument has been commonly used to study Schr\"odinger operators with reflective repetition potentials, e.g.,  \cite{dk, hks95,jlm,CF24}.
	
	In the present paper, we build on the approaches developed by Jitomirskaya-Simon \cite{js} and Jitomirskaya-Liu \cite{jl2} to study almost reflective repetition potentials. Our methods provide a new proof of the absence of eigenvalues for almost repetition potentials.

\begin{theorem}\label{mthm}
	Assume \(\gamma > \mathcal{L}(E)\). Then the eigen-equation \(Hu = Eu\) has no \(\ell^2(\mathbb{Z})\) solutions.
\end{theorem}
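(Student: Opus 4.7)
The plan is to follow the Jitomirskaya-Simon and Jitomirskaya-Liu palindromic framework, now adapted from the reflection symmetry to the translation symmetry inherent in the repetition hypothesis. Suppose, toward contradiction, that $u \not\equiv 0$ is an $\ell^2$ solution of $Hu = Eu$. Let $\Phi(k) = (u(k+1), u(k))^\top$, so that $\Phi(k) = A_{k,0}(E)\Phi(0)$ and $\Phi(k) \to 0$ as $|k| \to \infty$. Without loss of generality take the repetition sequence $k_n \to +\infty$, and write $A_n := A_{k_n, 0}(E) \in SL(2, \mathbb{C})$.

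The first step is to convert the pointwise almost-repetition of $V$ into a matrix-level almost-invariance. A standard telescoping of one-step transfer matrices, combined with the definition of $\mathcal{L}(E)$, yields
\[
\|A_{0, -k_n}(E) - A_n\| \leq k_n\,e^{(\mathcal{L}(E) + \epsilon - \gamma)k_n}
\]
for any $\epsilon > 0$ and all sufficiently large $n$; since $\gamma > \mathcal{L}(E)$, this norm tends to zero. Applying the Cayley-Hamilton identity $A_n + A_n^{-1} = \operatorname{tr}(A_n)I$, and using $\Phi(k_n) = A_n\Phi(0)$ exactly together with $\Phi(-k_n) = A_{0,-k_n}^{-1}\Phi(0) \approx A_n^{-1}\Phi(0)$ via the above mismatch, one obtains the approximate Gordon-type identity
\[
\Phi(k_n) + \Phi(-k_n) = \operatorname{tr}(A_n)\,\Phi(0) + \zeta_n.
\]
As $n \to \infty$, the $\ell^2$ property forces the left-hand side to vanish, so the identity demands $\operatorname{tr}(A_n)\Phi(0) + \zeta_n \to 0$. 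If $|\operatorname{tr}(A_n)|$ stays bounded below along a subsequence and $\zeta_n \to 0$, this immediately gives $\Phi(0) = 0$, hence $u \equiv 0$, the desired contradiction.

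The complementary regime $\operatorname{tr}(A_n) \to 0$, in which $A_n$ is approximately elliptic, is handled by a parallel forward Cayley-Hamilton identity $\Phi(2k_n) + \Phi(0) = \operatorname{tr}(A_n)\Phi(k_n) + \xi_n$: combined with the trace-zero hypothesis and $\|\Phi(k_n)\|, \|\Phi(2k_n)\| \to 0$, this again enforces $\Phi(0) = 0$. Alternatively, I would iterate the repetition to approximate $\Phi(mk_n) \approx A_n^m\Phi(0)$ for bounded $m$ and exploit that elliptic matrices preserve norms up to a bounded factor, incompatible with $\Phi(mk_n) \to 0$.

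The main obstacle is the error control in $\zeta_n$. The naive estimate $\|\zeta_n\| \leq \|A_n^{-1}\|\cdot\|A_n - A_{0,-k_n}\|\cdot\|\Phi(-k_n)\|$ only yields the classical Gordon threshold $\gamma > 2\mathcal{L}(E)$, losing one factor of $e^{\mathcal{L}(E)k_n}$ through the $\|A_n^{-1}\|$ term. Reaching the sharp regime $\gamma > \mathcal{L}(E)$ requires — in the spirit of Jitomirskaya-Liu — inserting the $\ell^2$-smallness of $\|\Phi(-k_n)\|$ into the error estimate and tracking the interplay with the eigen-decomposition of $A_n$, so that the extra exponential factor is absorbed by $\ell^2$-decay instead of by the operator norm. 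This absorption, which is the technical heart of both the palindromic and the present proof, must be executed jointly with the case analysis on $|\operatorname{tr}(A_n)|$ in order to close the argument uniformly across hyperbolic, elliptic, and degenerate regimes.
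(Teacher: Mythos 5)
Your plan correctly identifies the crux --- the naive backward estimate loses a factor $\|A_n^{-1}\|\sim e^{\mathcal{L}(E)k_n}$ and therefore only reaches the classical threshold $\gamma>2\mathcal{L}(E)$ --- but it stops exactly there: the ``absorption'' of that factor by $\ell^2$-decay is announced as the technical heart and never carried out. As written, the only smallness you extract from $u\in\ell^2$ is qualitative, $\|\Phi(\pm k_n)\|=o(1)$ with no rate, and a rate-free $o(1)$ cannot cancel an exponentially growing factor of size $e^{(2\mathcal{L}(E)-\gamma+2\epsilon)k_n}$ when $\mathcal{L}(E)<\gamma\le 2\mathcal{L}(E)$. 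The trace dichotomy does not rescue this. The forward Cayley--Hamilton identity does close the case where $\operatorname{tr}(A_n)$ stays bounded along a subsequence, since there the error $(A_{2k_n,k_n}-A_n)\Phi(k_n)$ is controlled at the sharp threshold and $\Phi(0)=\operatorname{tr}(A_n)\Phi(k_n)-\Phi(2k_n)+o(1)$ forces $\Phi(0)=0$. But in the genuinely hyperbolic case $|\operatorname{tr}(A_n)|\to\infty$ you must compare $A_n^{-1}\Phi(0)$ with $\Phi(-k_n)$, and for that comparison you offer nothing beyond the naive bound you yourself reject. So the proposal is a plan with the decisive step missing, not a proof.

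The quantitative input that fills this gap in the paper is a Wronskian bound, not an eigen-decomposition of $A_n$. Setting $u_n(k)=u(k_n+k)$, the repetition hypothesis gives $|W(u,u_n)(k)-W(u,u_n)(k-1)|\le e^{-\gamma k_n}|u(k)u_n(k)|$, and summing with Cauchy--Schwarz (here the $\ell^2$ normalization enters \emph{quantitatively}, since $\sum_j|u(k+j)u_n(k+j)|\le 1$) together with $W(u,u_n)(k)\to0$ as $k\to\infty$ yields $|W(u,u_n)(k)|\le e^{-\gamma k_n}$ for all $k$. Evaluating at $k=-1$ and writing $(u(k_n),u(k_n-1))^\top=a_n(u(0),u(-1))^\top+b_n(-u(-1),u(0))^\top$ converts this into the rate-carrying smallness $b_n=O(e^{-\gamma k_n})$, while $a_n=o(1)$ may remain qualitative. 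Applying $T_{0,k_n}$, the exponentially small $b_n$ absorbs $\|T_{0,k_n}\|\le Ce^{(\mathcal{L}(E)+\epsilon)k_n}$ precisely when $\gamma>\mathcal{L}(E)$, and the telescoping lemma turns the $a_n$-term into $a_n(u(-k_n),u(-k_n-1))^\top+o(1)$, which tends to $0$, contradicting $(u(0),u(-1))\ne(0,0)$. Some device of this kind --- producing exponential smallness with rate $\gamma$ along the one direction that the transfer matrix amplifies --- is exactly what your sketch needs and does not supply.
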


Consider the (discrete) quasi-periodic Schr\"{o}dinger operator on \(\ell^2(\mathbb{Z})\):
\begin{equation}\label{op1}
(H_{v,\alpha,\theta}u)(k) = u(k+1) + u(k-1) + v(\theta + k\alpha)u(k), \quad k \in \mathbb{Z},
\end{equation}
where \(v: \mathbb{T} = \mathbb{R}/\mathbb{Z} \to \mathbb{R}\) is a continuous potential, \(\alpha\) is the frequency, and \(\theta\) is the phase.

Let \({L}(E)\) denote the Lyapunov exponent of the operator \eqref{op1}:
\begin{equation}
{L}(E) = \limsup_{j \to \infty} \frac{1}{j} \int_{\mathbb{T}} \ln \|T_{j,0}(E)\| \, d\theta = \lim_{j \to \infty} \frac{1}{j} \int_{\mathbb{T}} \ln \|T_{j,0}(E)\| \, d\theta.
\end{equation}
For quasi-periodic Schr\"{o}dinger operators with continuous potentials, it is known that \(L(E) = \mathcal{L}(E)\) (independent of $\theta$).

Define 
\[
\beta(\alpha) = \limsup_{k \to \infty} \frac{-\ln \|k\alpha\|_{\mathbb{T}}}{k},
\]
where \(\|x\|_{\mathbb{T}} = \mathrm{dist}(x, \mathbb{Z})\).

\begin{corollary}\label{cor1}
	Assume \(v\) is \(\kappa\)-H\"older continuous. Then \(H\) has no eigenvalues in  the regime \(\{E \in \C: {L}(E) < \kappa \beta(\alpha)\}\).
\end{corollary}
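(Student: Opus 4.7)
The plan is to reduce Corollary~\ref{cor1} directly to Theorem~\ref{mthm} by verifying that the quasi-periodic potential \(V(k) = v(\theta + k\alpha)\) satisfies the \(\gamma\)-repetition condition \eqref{rep} with \(\gamma\) arbitrarily close to \(\kappa \beta(\alpha)\) from below. Since the paper states that \(\mathcal{L}(E) = L(E)\) for continuous quasi-periodic potentials, the hypothesis \(L(E) < \kappa\beta(\alpha)\) will then allow us to pick \(\gamma\) with \(\mathcal{L}(E) < \gamma < \kappa\beta(\alpha)\) and invoke Theorem~\ref{mthm}.

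First, I would unpack the definition of \(\beta(\alpha)\). By definition, for every \(\epsilon > 0\) there exists an increasing sequence of positive integers \(q_n \to \infty\) (for instance, a subsequence of the denominators of continued-fraction convergents of \(\alpha\)) such that
\[
\|q_n \alpha\|_{\mathbb{T}} \le e^{-(\beta(\alpha)-\epsilon)\, q_n}
\]
for all sufficiently large \(n\). Next I would use the \(\kappa\)-Hölder continuity of \(v\): there exists a constant \(C > 0\) with \(|v(x) - v(y)| \le C\, \|x-y\|_{\mathbb{T}}^{\kappa}\) for all \(x, y \in \mathbb{T}\). Combining these yields, for every \(k \in \mathbb{Z}\),
\[
|V(q_n + k) - V(k)| = |v(\theta + (q_n+k)\alpha) - v(\theta + k\alpha)| \le C\, \|q_n\alpha\|_{\mathbb{T}}^{\kappa} \le C\, e^{-\kappa(\beta(\alpha)-\epsilon)\, q_n}.
\]

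Now, given the assumption \(L(E) < \kappa \beta(\alpha)\), I would fix \(\gamma\) and \(\epsilon > 0\) such that
\[
L(E) < \gamma < \kappa(\beta(\alpha) - \epsilon) < \kappa \beta(\alpha).
\]
For \(n\) large enough, \(C \le e^{(\kappa(\beta(\alpha)-\epsilon) - \gamma)\, q_n}\), so the bound above becomes \(|V(q_n+k) - V(k)| \le e^{-\gamma q_n}\) for all \(k \in \mathbb{Z}\). Passing to this tail subsequence, the potential \(V\) has \(\gamma\)-repetitions in the sense of \eqref{rep}. Since \(\gamma > \mathcal{L}(E) = L(E)\), Theorem~\ref{mthm} applies and rules out \(\ell^2(\mathbb{Z})\) eigenfunctions, finishing the corollary.

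The argument is essentially bookkeeping, so there is no serious obstacle; the only point requiring a modicum of care is the loss of the Hölder constant \(C\), which I handle by picking \(\gamma\) strictly less than \(\kappa(\beta(\alpha)-\epsilon)\) so that \(C\) can be absorbed into the exponential for large \(n\). Everything else is an immediate combination of the definition of \(\beta(\alpha)\), the Hölder estimate, and Theorem~\ref{mthm}.
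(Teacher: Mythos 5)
Your reduction is correct and is exactly the intended argument: Hölder continuity plus the definition of \(\beta(\alpha)\) gives \(\gamma\)-repetitions along a subsequence for any \(\gamma<\kappa\beta(\alpha)\), and then \(\mathcal{L}(E)=L(E)<\gamma\) lets you invoke Theorem~\ref{mthm}; the paper leaves precisely this bookkeeping to the reader. The only cosmetic point is the case \(\beta(\alpha)=\infty\), where one should phrase the choice of \(\gamma\) as ``any \(\gamma>L(E)\)'' rather than via \(\kappa(\beta(\alpha)-\epsilon)\), but the argument goes through verbatim.
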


By setting \(v(\theta) = 2\lambda \cos(2\pi \theta)\) in \eqref{op1}, we obtain the almost Mathieu operator \(H_{\lambda,\alpha,\theta}\). For the almost Mathieu operator, the Lyapunov exponent on the spectrum is given by \({L}(E) = \max\{\ln |\lambda|, 0\}\) \cite{bj}.

As a consequence, Corollary \ref{cor1} implies:

\begin{corollary}\label{cor2}
	Assume \(\alpha\) satisfies \(\beta(\alpha) > 0\). For the almost Mathieu operator, \(H_{\lambda,\alpha,\theta}\) has no eigenvalues when \(|\lambda| < e^{\beta(\alpha)}\).
\end{corollary}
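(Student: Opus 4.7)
The plan is to obtain Corollary \ref{cor2} as an immediate specialization of Corollary \ref{cor1} to the almost Mathieu setting, where both the regularity of the sampling function and the Lyapunov exponent admit completely explicit descriptions. First I would verify the regularity hypothesis: the potential $v(\theta) = 2\lambda \cos(2\pi \theta)$ is $1$-Lipschitz on $\T$, hence in particular $\kappa$-H\"older continuous with $\kappa = 1$. Therefore Corollary \ref{cor1} applies with $\kappa = 1$ and rules out eigenvalues of $H_{\lambda,\alpha,\theta}$ throughout the regime $\{E \in \C : L(E) < \beta(\alpha)\}$.

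Next I would reduce to energies in the spectrum. Since any eigenvalue $E$ of the self-adjoint operator $H_{\lambda,\alpha,\theta}$ lies in $\sigma(H_{\lambda,\alpha,\theta}) \subset \R$, it is enough to rule out eigenvalues $E$ in the spectrum. On the spectrum, the Bourgain--Jitomirskaya formula cited in the excerpt gives the explicit value $L(E) = \max\{\ln|\lambda|, 0\}$ for every irrational $\alpha$; note that $L(E)$ is $\theta$-independent because of ergodicity, so this is a statement uniform in $\theta$.

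Finally I would check the inequality. Under the standing hypotheses $|\lambda| < e^{\beta(\alpha)}$ and $\beta(\alpha) > 0$, we obtain simultaneously $\ln|\lambda| < \beta(\alpha)$ and $0 < \beta(\alpha)$, hence $L(E) = \max\{\ln|\lambda|, 0\} < \beta(\alpha)$ for every $E$ in the spectrum. Applying the $\kappa = 1$ instance of Corollary \ref{cor1} then gives the absence of eigenvalues in the spectrum, and therefore the absence of eigenvalues altogether.

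There is essentially no serious obstacle here beyond invoking the three ingredients (the trivial Lipschitz bound for cosine, the inclusion of eigenvalues in the spectrum, and the Bourgain--Jitomirskaya formula). All of the analytic content sits in Corollary \ref{cor1} and ultimately in Theorem \ref{mthm}; the role of the almost Mathieu operator is just to furnish a concrete family where the abstract criterion $L(E) < \kappa \beta(\alpha)$ translates into the clean quantitative threshold $|\lambda| < e^{\beta(\alpha)}$, so the deduction is a matter of bookkeeping rather than new analysis.
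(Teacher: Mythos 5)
Your deduction is correct and is exactly the route the paper intends: specialize Corollary \ref{cor1} with $\kappa=1$ (cosine is Lipschitz, hence H\"older with exponent $1$; the Lipschitz constant $4\pi|\lambda|$ is irrelevant, only the exponent matters), restrict to eigenvalues lying in the spectrum, and use the Bourgain--Jitomirskaya formula $L(E)=\max\{\ln|\lambda|,0\}$ together with $\beta(\alpha)>0$ to get $L(E)<\beta(\alpha)$. The paper gives no further detail, so your write-up matches its proof.
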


\begin{remark}
	As mentioned earlier, Corollary \ref{cor2} was previously proved using an improved Gordon-type argument \cite{ayz}.
\end{remark}
	\section{Proof of Theorem \ref{mthm}}
	Without loss of generality, assume that $k_n\to \infty$. The proof of  the case $k_n\to -\infty$ is similar.
%	In this section, $\varepsilon $ is arbitrarily small and $C$ is a large constant which depends on  $\sup_{n\in \Z}\{|V(n)|\}$ and $\varepsilon$.  
	Since  the spectrum of $H=\Delta+V$ is  bounded, we can assume $E$ is bounded. 
	By the definition of $\mathcal L(E)$, one has that
	\begin{equation}\label{lya1}
	||	T_{m,j}(E)||\leq Ce^{(\mathcal L (E)+\varepsilon)|m-j|},
	\end{equation}
	where the constant $C$ depends on $\varepsilon$ and the potential $V$. 
	\begin{lemma}\label{lem1}
		For all $m,j\in\Z$,  we have
		\begin{equation*}
		|| T_{m,j} (E)-T_{m+k_n,j+k_n}(E)||\leq C e^{-\gamma  k_n}  e^{(\mathcal L(E)+\varepsilon) |m-j|}. 
		\end{equation*}

	\end{lemma}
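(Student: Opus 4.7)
The plan is to derive the estimate by the standard telescoping identity for a product of matrices, using the $\gamma$-repetition hypothesis to control each single-step difference and the a priori bound \eqref{lya1} to control the surviving partial products. I will treat $m>j$ first; the case $m<j$ follows by the same argument applied to inverse one-step matrices, and $m=j$ is trivial.

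The pointwise input is immediate: $A_i(E)$ and $A_{i+k_n}(E)$ differ only in their $(1,1)$-entry, by $V(i+k_n)-V(i)$, so the repetition bound \eqref{rep} gives $\|A_{i+k_n}(E)-A_i(E)\|\leq e^{-\gamma k_n}$ for every $i\in\mathbb{Z}$. The telescoping expansion
\begin{equation*}
T_{m+k_n,j+k_n}(E)-T_{m,j}(E)=\sum_{i=j}^{m-1}T_{m,i+1}(E)\bigl(A_{i+k_n}(E)-A_i(E)\bigr)T_{i+k_n,j+k_n}(E)
\end{equation*}
then reduces the estimate to summing products of three factors: two long partial transfer matrices on the outside controlled by \eqref{lya1}, and a tiny single-step difference in the middle. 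Each summand is of size at most $C^{2}e^{-\gamma k_n}e^{(\mathcal L(E)+\varepsilon)(m-j-1)}$, and there are $m-j$ of them.

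To finish I absorb the resulting linear factor $(m-j)$ into the exponent at the cost of enlarging $\varepsilon$ arbitrarily slightly, using $(m-j)\leq C_\varepsilon e^{\varepsilon(m-j)}$. For $m<j$ the matrix $T_{m,j}$ is instead a product of inverses $A_i^{-1}$, so I repeat the same telescoping with those, noting that the one-step inverse matrices have uniformly bounded norm and that
\begin{equation*}
\|A_i^{-1}-A_{i+k_n}^{-1}\|=\|A_i^{-1}(A_{i+k_n}-A_i)A_{i+k_n}^{-1}\|\leq Ce^{-\gamma k_n},
\end{equation*}
after which the estimate goes through verbatim. I do not anticipate a genuine obstacle here --- this is the classical telescoping scheme underlying Gordon--Simon-type arguments --- the only mild bookkeeping point being to keep the constants consistent while $\varepsilon$ is enlarged to swallow the polynomial prefactor.
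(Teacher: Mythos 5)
Your proof is correct and is exactly the argument the paper has in mind: the paper's proof of Lemma \ref{lem1} simply invokes ``standard telescoping'' (citing \cite[Lemma 4.5]{simm}), which is the telescoping expansion you write out, with the single-step differences bounded by the repetition condition, the outer partial products bounded by \eqref{lya1}, and the polynomial factor $(m-j)$ absorbed by slightly enlarging $\varepsilon$.
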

	\begin{proof}
		By \eqref{lya1}, 
		the proof  of Lemma \ref{lem1}  follows from the standard telescoping. See \cite[Lemma  4.5]{simm} for example.
	\end{proof}
	\begin{proof}[\bf Proof of Theorem \ref{mthm} ]
		%It is well known that we only need to exclude the point spectra.
		We prove it  by contradiction.   Suppose there exists $E$	with $\mathcal L(E)< \gamma$   such that the eigen-equation $Hu=Eu$ has an $\ell^2(\Z)  $ solution  $u$.  
	Assume $u$ is normalized, namely
		\begin{equation*}
			|| u||_{\ell^2} = \sum_{k\in\Z}|u(k)|^2=1.
		\end{equation*}
		For each $n\in\Z_+$, define  sequences $\{u_n(k)\}_{k\in \Z}$ and $\{V_n(k)\}_{k\in \Z}$
		as   $u_n(k)= u(k_n+k)$ and  $V_n(k)=V(k_n+k)$, $k\in\Z$.
		Then by \eqref{rep},  one has that
		\begin{equation}\label{Eqp}
			|V(k)-V_n(k)|\leq e^{- \gamma k_n },\text{ for all } k\in\Z.
		\end{equation}
		We also have
		\begin{equation}\label{Equ}
			u(k+1)+u(k-1)+ V(k)u(k)=Eu(k)
		\end{equation}
		and
		\begin{equation}\label{Equi}
			u_n(k+1)+u_n(k-1)+ V_n(k)u_n(k)=Eu_n(k).
		\end{equation}
		Let $W(k)=W(f,g)=f(k+1)g(k)-f(k)g(k+1)$ be the
		Wronskian.

		By a standard calculation using (\ref{Eqp}), (\ref{Equ}) and (\ref{Equi}),
		we have
		\begin{eqnarray}
			% \nonumber to remove numbering (before each equation)
			|W(u,u_n)(k)-W(u,u_n)(k-1)| &\leq & |V(k)-V_n(k)||u(k)u_n(k)|  \nonumber \\
			&\leq &   e^{-\gamma k_n}|u(k)u_n(k)|. \nonumber%\label{almostconstancy1}
		\end{eqnarray}
		This implies for any $m\geq 0$ and $k$,
		\begin{eqnarray}
			% \nonumber to remove numbering (before each equation)
			|W(u,u_n)(k+m)-W(u,u_n)(k-1)| &\leq &  e^{-\gamma k_n}\sum_{j=0}^{m} |u(k+j)u_n(k+j)|  \nonumber \\
			&\leq &   e^{-\gamma k_n}, \label{Equ7}
		\end{eqnarray}
		where the second inequality holds by the  Cauchy–Schwarz inequality and 
		the fact that $||u||_{\ell^2}=||u_n||_{\ell^2}=1.$
		
		%Step 2.
		Notice that %the Schwarz inequality implies that
		$\sum_k|W(u,u_n)(k)|\leq 2$. Therefore, 
		\begin{equation}\label{g6}
		\lim_{k\to \infty}	|W(u,u_n)(k)|=0.
		\end{equation}
		By (\ref{Equ7}) and \eqref{g6}, we must have that for all $k\in\Z$,
		\begin{equation}\label{Equ8}
			|W(u,u_n)(k)|\leq  e^{-\gamma k_n}.
		\end{equation}

		Letting $k=-1$ in \eqref{Equ8}, one has that 
		\begin{equation}\label{g1}
			|W(u,u_n)(-1)|=
			\left |	\det \begin{pmatrix}
				u(0) & u(k_n)\\ 
				u(-1) & u(k_n-1)
			\end{pmatrix}\right |\leq e^{-\gamma k_n}.
		\end{equation}

		Write $\begin{pmatrix}
			u(k_n)\\ 
			u(k_n-1)
		\end{pmatrix}$  as the linear combination of $\begin{pmatrix}
			u(0) \\ 
			u(-1) 
		\end{pmatrix}$  and $ \begin{pmatrix}
			-u(-1)  \\ 
			u(0)  
		\end{pmatrix}$, namely
		\begin{equation}\label{g2}
			\begin{pmatrix}
				u(k_n)\\ 
				u(k_n-1)
			\end{pmatrix}=a_n\begin{pmatrix}
				u(0) \\ 
				u(-1) 
			\end{pmatrix}+b_n \begin{pmatrix}
				-u(-1)  \\ 
				u(0)  
			\end{pmatrix}.
		\end{equation}
		Since $u$ is non-trivial, one has that
		\begin{equation}\label{g100}
		|u(0)|^2+|u(-1)|^2>0.
		\end{equation}
		By \eqref{g1}, \eqref{g2} and \eqref{g100}, we have that 
		\begin{equation}\label{g3}
			b_n=O(e^{-\gamma k_n}).
		\end{equation}
		Since $u\in\ell^2(\Z)$, one has that
		\begin{equation}\label{g4}
			a_n=o(1),
		\end{equation}
		as $n\to\infty$. 
By the definition of transfer matrices, one has that
		\begin{equation}\label{g101}
			\begin{pmatrix}
				u(k) \\ 
				u(k-1) 
			\end{pmatrix} =T_{k,m} (E) \begin{pmatrix}
				u(m) \\ 
				u(m-1) 
			\end{pmatrix}. 
		\end{equation}

		Multiplying $T_{0,k_n}$ on both sides  of \eqref{g2},  we conclude  that 
		\begin{eqnarray}
		\begin{pmatrix}
		u(0)\\ 
		u(-1)
		\end{pmatrix}&=&a_nT_{0,k_n}(E)\begin{pmatrix}
		u(0) \\ 
		u(-1) 
		\end{pmatrix}+b_n T_{0,k_n}(E) \begin{pmatrix}
		-u(-1)  \\ 
		u(0)  
		\end{pmatrix}  \label{g103}\\
			 &=&a_nT_{0,k_n}(E)\begin{pmatrix}
				u(0) \\ 
				u(-1) 
			\end{pmatrix}+ O(e^{-(\gamma -\mathcal L(E)-\varepsilon)k_n})  \label{g104}\\
			&=&a_n T_{-k_n,0}(E)\begin{pmatrix}
				u(0) \\ 
				u(-1) 
			\end{pmatrix}+ O(e^{-(\gamma -\mathcal L(E)-\varepsilon)k_n}) \label{g105}\\
			&=&a_n \begin{pmatrix}
				u(-k_n) \\ 
				u(-k_n-1) 
			\end{pmatrix}+  O(e^{-(\gamma -\mathcal L(E)-\varepsilon)k_n}), \label{g106}
		\end{eqnarray}
		where \eqref{g103}  and \eqref{g106} hold by  \eqref{g101},   \eqref{g104} holds by \eqref{lya1} and \eqref{g3},
		and \eqref{g105} holds by
		Lemma \ref{lem1}.
		The equality  \eqref{g106}  can  not hold   since  $ \left \| \begin{pmatrix}
			u(-k_n) \\ 
			u(-k_n-1) 
		\end{pmatrix}\right \|=o(1)$ and $a_n=o(1)$ as $n\to \infty$. We finish the proof.
	\end{proof}
	
	\section*{Acknowledgments}
	%The authors are very grateful to the anonymous referees for their knowledgeable reports, which helped us to improve our manuscript.
	
	W. Liu was a 2024-2025 Simons fellow.  
	W. Liu   was supported by NSF DMS-2246031 and DMS-2052572.


\begin{thebibliography}{10}
	
	\bibitem{ayz}
	A.~Avila, J.~You, and Q.~Zhou.
	\newblock Sharp phase transitions for the almost {M}athieu operator.
	\newblock {\em Duke Math. J.}, 166(14):2697--2718, 2017.
	
	\bibitem{as82}
	J.~Avron and B.~Simon.
	\newblock Singular continuous spectrum for a class of almost periodic {J}acobi
	matrices.
	\newblock {\em Bull. Amer. Math. Soc. (N.S.)}, 6(1):81--85, 1982.
	
	\bibitem{bj}
	J.~Bourgain and S.~Jitomirskaya.
	\newblock Continuity of the {L}yapunov exponent for quasiperiodic operators
	with analytic potential.
	\newblock volume 108, pages 1203--1218. 2002.
	\newblock Dedicated to David Ruelle and Yasha Sinai on the occasion of their
	65th birthdays.
	
	\bibitem{CF24}
	C.~Cedzich and J.~Fillman.
	\newblock Absence of bound states for quantum walks and {CMV} matrices via
	reflections.
	\newblock {\em J. Spectr. Theory}, 14(4):1513--1536, 2024.
	
	\bibitem{dam20}
	D.~Damanik.
	\newblock Gordon-type arguments in the spectral theory of one-dimensional
	quasicrystals.
	\newblock In {\em Directions in mathematical quasicrystals}, volume~13 of {\em
		CRM Monogr. Ser.}, pages 277--305. Amer. Math. Soc., Providence, RI, 2000.
	
	\bibitem{dfm02}
	D.~Damanik.
	\newblock Absence of eigenvalues for a class of {S}chr\"{o}dinger operators on
	the strip.
	\newblock {\em Forum Math.}, 14(5):797--806, 2002.
	
	\bibitem{dam04}
	D.~Damanik.
	\newblock A version of {G}ordon's theorem for multi-dimensional
	{S}chr\"{o}dinger operators.
	\newblock {\em Trans. Amer. Math. Soc.}, 356(2):495--507, 2004.
	
	\bibitem{dk}
	D.~Damanik and R.~Killip.
	\newblock Reflection symmetries of almost periodic functions.
	\newblock {\em J. Funct. Anal.}, 178(2):251--257, 2000.
	
	\bibitem{fgpw}
	L.~Fang, S.~Guo, Y.~Peng, and F.~Wang.
	\newblock Quantitative version of {G}ordon's lemma for {H}amiltonian with
	finite range.
	\newblock {\em Linear Algebra Appl.}, 687:91--107, 2024.
	
	\bibitem{f17}
	J.~Fillman.
	\newblock Purely singular continuous spectrum for {S}turmian {CMV} matrices via
	strengthened {G}ordon lemmas.
	\newblock {\em Proc. Amer. Math. Soc.}, 145(1):225--239, 2017.
	
	\bibitem{gyz24}
	L.~Ge, J.~You, and Q.~Zhou.
	\newblock Structured quantitative almost reducibility and its applications.
	\newblock {\em arXiv preprint arXiv:2407.05490}, 2024.
	
	\bibitem{gor}
	A.~Y. Gordon.
	\newblock The point spectrum of the one-dimensional {S}chr\"odinger operator.
	\newblock {\em Uspekhi Matematicheskikh Nauk}, 31(4):257--258, 1976.
	
	\bibitem{hks95}
	A.~Hof, O.~Knill, and B.~Simon.
	\newblock Singular continuous spectrum for palindromic {S}chr\"{o}dinger
	operators.
	\newblock {\em Comm. Math. Phys.}, 174(1):149--159, 1995.
	
	\bibitem{jk}
	S.~Jitomirskaya and I.~Kachkovskiy.
	\newblock {$L^2$}-reducibility and localization for quasiperiodic operators.
	\newblock {\em Math. Res. Lett.}, 23(2):431--444, 2016.
	
	\bibitem{jk24}
	S.~Jitomirskaya and I.~Kachkovskiy.
	\newblock Sharp arithmetic delocalization for quasiperiodic operators with
	potentials of semi-bounded variation.
	\newblock {\em arXiv preprint arXiv:2408.16935}, 2024.
	
	\bibitem{jko}
	S.~Jitomirskaya and S.~Koci\'{c}.
	\newblock Spectral theory of {S}chr\"{o}dinger operators over circle
	diffeomorphisms.
	\newblock {\em Int. Math. Res. Not. IMRN}, (13):9810--9829, 2022.
	
	\bibitem{jl}
	S.~Jitomirskaya and W.~Liu.
	\newblock Arithmetic spectral transitions for the {M}aryland model.
	\newblock {\em Comm. Pure Appl. Math.}, 70(6):1025--1051, 2017.
	
	\bibitem{jl1}
	S.~Jitomirskaya and W.~Liu.
	\newblock Universal hierarchical structure of quasiperiodic eigenfunctions.
	\newblock {\em Ann. of Math. (2)}, 187(3):721--776, 2018.
	
	\bibitem{jl2}
	S.~Jitomirskaya and W.~Liu.
	\newblock Universal reflective-hierarchical structure of quasiperiodic
	eigenfunctions and sharp spectral transition in phase.
	\newblock {\em J. Eur. Math. Soc. (JEMS)}, 26(8):2797--2836, 2024.
	
	\bibitem{jlm}
	S.~Jitomirskaya, W.~Liu, and L.~Mi.
	\newblock Sharp palindromic criterion for semi-uniform dynamical localization.
	\newblock {\em arXiv preprint arXiv:2410.21700}, 2024.
	
	\bibitem{Sasha}
	S.~Jitomirskaya, S.~A. Molchanov, B.~Simon, and B.~R. Vainberg.
	\newblock Alexander {G}ordon.
	\newblock {\em J. Spectr. Theory}, 9(4):1157--1164, 2019.
	
	\bibitem{js}
	S.~Jitomirskaya and B.~Simon.
	\newblock Operators with singular continuous spectrum. {III}. {A}lmost periodic
	{S}chr\"{o}dinger operators.
	\newblock {\em Comm. Math. Phys.}, 165(1):201--205, 1994.
	
	\bibitem{yang}
	S.~Jitomirskaya and F.~Yang.
	\newblock Singular continuous spectrum for singular potentials.
	\newblock {\em Comm. Math. Phys.}, 351(3):1127--1135, 2017.
	
	\bibitem{jz}
	S.~Jitomirskaya and S.~Zhang.
	\newblock Quantitative continuity of singular continuous spectral measures and
	arithmetic criteria for quasiperiodic {S}chr\"{o}dinger operators.
	\newblock {\em J. Eur. Math. Soc. (JEMS)}, 24(5):1723--1767, 2022.
	
	\bibitem{liupmj}
	W.~Liu.
	\newblock Small denominators and large numerators of quasiperiodic
	{S}chr{\"o}dinger operators.
	\newblock {\em Peking Mathematical Journal to appear}.
	
	\bibitem{liu3}
	W.~Liu.
	\newblock Continuous quasiperiodic {S}chr\"{o}dinger operators with {G}ordon
	type potentials.
	\newblock {\em J. Math. Phys.}, 59(6):063501, 6, 2018.
	
	\bibitem{sim1}
	B.~Simon.
	\newblock Almost periodic {S}chr{\"o}dinger operators: a review.
	\newblock {\em Advances in Applied Mathematics}, 3(4):463--490, 1982.
	
	\bibitem{simm}
	B.~Simon.
	\newblock Almost periodic {S}chr\"{o}dinger operators. {IV}. {T}he {M}aryland
	model.
	\newblock {\em Ann. Physics}, 159(1):157--183, 1985.
	
	\bibitem{yzzjst}
	J.~You, S.~Zhang, and Q.~Zhou.
	\newblock Point spectrum for quasi-periodic long range operators.
	\newblock {\em J. Spectr. Theory}, 4(4):769--781, 2014.
	
\end{thebibliography}
\end{document}